\documentclass[preprint,11pt]{elsarticle}
\pdfoutput=1

\makeatletter
\def\ps@pprintTitle{%
 \let\@oddhead\@empty
 \let\@evenhead\@empty
 \def\@oddfoot{}%
 \let\@evenfoot\@oddfoot}
\makeatother

\usepackage[titletoc,title]{appendix}

\makeatletter
 \@addtoreset{chapter}{part}
 \@addtoreset{@ppsaveapp}{part}
\makeatother

\usepackage{hyperref} 

\usepackage{amsmath,amsthm}
\usepackage{algorithm}
\usepackage{algpseudocode}
\usepackage{setspace}
\floatname{algorithm}{}
\let\Algorithm\algorithm
\renewcommand\algorithm[1][]{\Algorithm[#1]\setstretch{1.2}}

\usepackage{algpseudocode}
\usepackage{euscript}
\usepackage{enumerate}
\usepackage{enumitem}
\setlist{leftmargin=5.5mm}
\usepackage{subfigure}

\usepackage{verbatim} 

\usepackage{xspace} 

\newcommand{\CommentX}[1]{\unskip~#1~}

\algnewcommand{\IIf}[1]{\State\algorithmicif\ #1\ \algorithmicthen}
\algnewcommand{\EndIIf}{\unskip\ \algorithmicend\ \algorithmicif}

\algnewcommand{\IiIf}[1]{\State\algorithmicif\ #1\ \algorithmicthen}
\algnewcommand{\EndIiIf}{\unskip\ \algorithmicend\ \algorithmicif}

\algnewcommand{\IfThenElse}[3]{
 \State \algorithmicif\ #1\ \algorithmicthen\ #2\ \algorithmicelse\ #3}
\algnewcommand{\EndIfThenElse}{\unskip\ \algorithmicend\ \algorithmicif}

\newtheorem{theorem}{Theorem}
\newtheorem{prop}[theorem]{Proposition}

\begin{document}

\begin{frontmatter}

\title{A new exact approach for the Bilevel Knapsack with Interdiction Constraints}

\date{}

 \author[1,2]{Federico Della Croce}
 \author[1]{Rosario Scatamacchia}
 
  \address[1]{\small Dipartimento di Ingegneria Gestionale e della Produzione, Politecnico di Torino,\\ Corso Duca degli Abruzzi 24, 10129 Torino, Italy, \\{\tt \{federico.dellacroce, rosario.scatamacchia\}@polito.it }}
 \address[2]{CNR, IEIIT, Torino, Italy}
 
\begin{abstract}We consider the Bilevel Knapsack with Interdiction Constraints, an extension of the classic 0-1 knapsack problem formulated as a Stackelberg game 
with two agents, a leader and a follower, that choose items from a common set and hold their own private knapsacks. First, the leader selects some items to be interdicted for the follower while satisfying a capacity constraint. Then the follower packs a set of the remaining items according to his knapsack constraint in order to maximize the profits. The goal of the leader is to minimize the follower's profits. The presence of two decision levels makes this problem very difficult to solve in practice: the current state-of-the-art algorithms can solve to optimality instances with 50-55 items at most. 
We derive effective lower bounds and present a new exact approach that exploits the structure of the induced follower's problem. The approach successfully solves all benchmark instances within one second in the worst case and larger instances with up to 500 items within 60 seconds.
\end{abstract}
\begin{keyword}  Bilevel Knapsack with Interdiction Constraints \sep Exact approach \sep  Bilevel programming
\end{keyword}

\end{frontmatter}
\section{Introduction}
\label{sec:Intro}
In the recent years, a growing attention has been centered to multilevel programming. This emerging field considers optimization problems with a hierarchal structure where many decision makers sequentially operate to reach conflicting objectives. Each agent takes decisions that may affect objectives and decisions of the agents at lower levels. At the same time, the latter decisions impact on the objectives of the agents at upper levels. Hierarchal contexts arise in many real-life applications in supply chains, energy sector, logistics and telecommunication networks among others. The presence of many decision levels makes these problems very challenging to solve. \\
The most relevant research in the field has been pursued for bilevel optimization where two agents, denoted as a leader and a follower, play a Stackelberg game (\cite{St52}).
In this game, the leader takes the first decision and then the follower reacts taking into account the leader's strategy. Eventually, the agents receive a pay-off which depends on both leader's and follower's choices. The goal is typically to find a strategy for the leader that optimizes his own objective.
Two standard assumptions are considered in a Stackelberg game: \textit{perfect knowledge}, that is each agent knows the problem solved by the other agent; \textit{rationale behavior}, namely each agent has no interest in deviating from his own objective. \\
Bilevel optimization considers Mixed-Integer Bilevel Linear Programs (MIBLP) where both the leader and the follower solve a combinatorial optimization problem with linear objective function and constraints and with either continuous or integer variables. The first generic Branch and Bound approach for MIBLP was provided in \cite{BaMo90}. Branch and Cut schemes were introduced in \cite{DeRa09}, \cite{Den11}. Further approaches were proposed in \cite{CaMa15, FiLjMoSi16,XuWa14}. An improved generic MIBLP solver has been recently proposed in \cite{FiLjMoSi17}. We refer to \cite{FiLjMoSi17} and the references therein for an overview on MIBLP solvers and related applications.\\

In this paper, we consider the Bilevel Knapsack with Interdiction Constraints (BKP), as introduced in \cite{Den11}.
The problem is an extension of the classic 0-1 Knapsack Problem (KP) (see monographs \cite{KePfPi04} and \cite{MarTot90}) formulated as a Stackelberg game. 
More precisely, the leader and the follower choose items from a common set and hold their own private knapsacks. First, the leader selects some items to be interdicted for the follower while satisfying a capacity constraint. Then the follower packs a set of the remaining items according to his knapsack constraint in order to maximize the profits. The goal of the leader is to minimize the follower's profits. \\
In \cite{CaCarLoWo13} it is shown that BKP is $\Sigma_{2}^p$-complete in the polynomial hierarchy complexity. Essentially, BKP cannot be formulated as a single level problem unless the polynomial hierarchy collapses (as also pointed out in \cite{CaCarLoWo16}). This makes the problem even more difficult to solve than an NP-Complete problem. We refer to \cite{Je85} for an introduction on polynomial hierarchy. \\
One of the best performing algorithms for BKP is given in \cite{CaCarLoWo16}. The algorithm, denoted as CCLW, relies on the dualization of the continuous relaxation of the follower's problem and on iteratively computing upper bounds for the problem until a stopping criterion applies. The approach is motivated by the lack of significant lower bounds for the problem. 
Algorithm CCLW solves to optimality instances with 50 items within a CPU time limit of 3600 seconds, running out of time in instances with 55 items only. Very recently, an improved branch-and-cut algorithm has been given in \cite{FiLjMoSi18}. The proposed approach manages to solve to optimality all benchmark instances in \cite{CaCarLoWo16}, requiring at most a computation time of about 85 seconds in an instance with 55 items. However, no computational evidence is provided in \cite{FiLjMoSi18} about the performance of the derived algorithm on larger instances. We also mention the work of \cite{FiMoSiEJOR18} where a heuristic approach is proposed for BKP and for other interdiction games. 


\smallskip
Other bilevel knapsack problems have been tackled in the literature. We mention the work in \cite{BrHaMa13} where the leader cannot interdict items but modifies the follower's capacity. In \cite{ChZh13}, the leader can modify the follower's objective function only. As discussed in \cite{CaCarLoWo16}, these knapsack problems are easier to handle than BKP. Recently, a polynomial algorithm has been provided in \cite{CaLoMa18} for the BKP variation where the follower solves a continuous knapsack problem. \\

\medskip
Our contribution for BKP is twofold. 
First, we derive effective lower bounds based on mathematical programming. 
Second, we present a new exact approach that exploits the induced follower's problem and the derived lower bounds. The proposed approach shows up to be very effective 
successfully solving all benchmark literature instances provided in \cite{CaCarLoWo16} within few seconds of computation. Moreover, our algorithm manages to solve to optimality instances with up to 500 items within a CPU time limit of 60 seconds. 

\smallskip

The paper is organized as follows. In Section~\ref{sec:themodel}, the bilevel linear programming formulation of the problem is introduced. In Section \ref{sec:LB}, we discuss the lower bounds for BKP. We outline the proposed exact solution approach in Section \ref{sec:ExaApp} and discuss the computational results in Section \ref{sec:ComRes}. 
Section~\ref{sec:Concl} provides some concluding remarks.



\section{Notation and problem formulation}
\label{sec:themodel}

In BKP a set of $n$ items and two knapsacks are given. Each item $i$ $(=1,\dots, n)$ has associated a profit $p_i > 0$ and a weight $w_i > 0$ for the follower's knapsack and a weight $v_i > 0$ for the leader's knapsack. 
Leader and follower have different knapsack capacities denoted by $C_u$ and $C_l$, respectively. Quantities $p_i$, $v_i$, $w_i$ $(i=1,\dots, n)$, $C_u$, $C_l$ are assumed to be integer, with $v_i < C_u$ and $w_i < C_l$ for all $i$. To avoid trivial instances, it is also assumed that $\sum\limits_{i=1}^{n} v_{i} > C_u$ and $\sum\limits_{i=1}^{n} w_{i} > C_l$.
We introduce $0/1$ variables $x_i$  $(i=1, \dots,n)$ 
equal to one if the leader selects items $i$ and $0/1$ variables $y_i$ equal to one if item $i$ is chosen 
by the follower. BKP can be modeled as follows: 



\begin{align}
\text{min}\quad & \sum\limits_{i=1}^{n} p_{i}y_{i} \label{eq:ObjL}\\
	\text{subject to}\quad
	& \sum\limits_{i=1}^{n} v_{i}x_{i} \leq C_u \label{eq:capL}\\
	& x_{i} \in\{0,1\} \qquad i= 1,\dots,n \label{eq:varDefX} \\
	\text{where $y_1, \dots, y_n$ solve} \nonumber\\
	\text{the follower's problem:}\quad
	\text{max}\quad & \sum\limits_{i=1}^{n} p_{i}y_{i} \label{eq:ObjF}\\
	\text{subject to}\quad
	& \sum\limits_{i=1}^{n} w_{i}y_{i} \leq C_l \label{eq:capF}\\
	&y_{i} \leq 1 - x_{i} \qquad i= 1,\dots,n \label{eq:Interd} \\
	&y_{i} \in\{0,1\} \qquad i= 1,\dots,n \label{eq:varDefY} 
\end{align}



The leader's objective function \eqref{eq:ObjL} minimizes the profits of the follower through the interdiction constraints  \eqref{eq:Interd}. These constraints ensure that each item $i$ can be selected by the follower, i.e. $y_i \leq 1$, only if the item is not interdicted by the leader, i.e. $x_i = 0$. Constraint \eqref{eq:capL} represents the leader's capacity constraint. The objective function \eqref{eq:ObjF} maximizes the follower's profits and constraint \eqref{eq:capF} represents the follower's capacity constraint. Constraints \eqref{eq:varDefX} and \eqref{eq:varDefY} define the domain of the variables.

The optimal solution value of model \eqref{eq:ObjL}-\eqref{eq:varDefY} is denoted by $z^*$. The optimal solution vectors of variables $x_i$ and $y_i$ are respectively denoted by $x^*$ and $y^*$. Notice that in model \eqref{eq:ObjL}-\eqref{eq:varDefY} there always exists an optimal solution for the leader which is maximal, namely where items are included in the leader's knapsack until there is no enough capacity left. 

\smallskip 
Let us now recall the optimal solution of the continuous relaxation of a standard KP, namely the follower's model \eqref{eq:ObjF}-\eqref{eq:varDefY} without constraints \eqref{eq:Interd} and constraints \eqref{eq:varDefY} replaced by inclusion in $[0, 1]$. Under the assumption $\sum\limits_{i=1}^{n} w_{i} > C_l$, this solution has the following structure. Consider the sorting of the items by non-increasing ratios of profits over weights:


\begin{equation}
\label{eq:SortByEff}
\frac{p_1}{w_1} \geq \frac{p_2}{w_2} \geq \dots \geq \frac{p_n}{w_n}.
\end{equation}


According to this order, items $j = 1,2,\dots$ are inserted into the knapsack as long as $\sum\limits_{k=1}^{j} w_{k} \leq C_l$. The first item $s$ which cannot be fully packed is commonly denoted in the knapsack literature as the \textit{split} item (or \textit{break}/\textit{critical} item). 
The optimal solution of the KP linear relaxation is given by setting $y_j = 1$ for  $j= 1,\dots,s-1$, $y_j = 0$ for  $j= s+1,\dots,n$ and $y_s = (C_l - \sum\limits_{j=1}^{s-1} w_j)/w_s$. The solution with items $1, \dots, (s-1)$ is a feasible solution for KP and is commonly denoted as the \textit{split solution}.

In the remainder of the paper, we assume the ordering of the items \eqref{eq:SortByEff}. 
We denote by $KP(x)$ the follower's knapsack problem induced by a leader's strategy encoded in vector $x$, i.e. a knapsack problem with item set $$S:=\{i: x_i = 0, x_i \in x \}.$$ 
We also denote by $KP^{LP}(x)$ the corresponding Linear Programming (LP) relaxation. If $\sum_{i \in S} w_i > C_l$,  we define the \textit{critical} item $c$ of $KP^{LP}(x)$ as the last item with a strictly positive value in its optimal solution. Thus, we have $y_c \in (0, 1]$ and a corresponding split solution with profit
\begin{equation}
\label{eq:SplitSol}
\sum\limits_{i \in  S: i < c} p_i = \sum\limits_{i = 1}^{c-1} p_i(1 - x_i)
\end{equation}
which constitutes a feasible solution for $KP(x)$.
Notice that 
we denote by $z(M)$ the optimal solution value of any given
mathematical model $M$. 

\section{Computing lower bounds on BKP}
\label{sec:LB}

Consider the optimal solution vector $x^*$. In the induced follower's knapsack problem $KP(x^*)$ with item set $S$, two cases can occur:
either there is no critical item in $KP^{LP}(x^*)$, namely $\sum_{i \in S} w_i  \leq C_l$, or one critical item exists, namely $\sum_{i \in S} w_i > C_l$. The first case can be easily handled by considering that the follower will pack all items not interdicted by the leader. This case is discussed in Section~\ref{NOcrit}. \\
In the second case, we derive effective lower bounds on BKP that constitute the main ingredient of the exact approach presented in Section~\ref{sec:ExaApp}. Since we don't know a priori the leader's optimal solution $x^*$, we proceed by guessing the critical item of  $KP^{LP}(x^*)$, namely
we formulate an Integer Linear Programming (ILP) model where we impose that a given item $c$ must be critical and evaluate the profit of the corresponding split solution. We consider binary variables $k_j$ $(j= 1, \dots, w_c)$ associated with the weight contribution of the critical item and introduce the following model (denoted as $CRIT_1(c)$).

\smallskip
{$\quad CRIT_1(c)$:}


\begin{align}
\text{min}\quad & \sum\limits_{i=1}^{c-1} p_{i}(1-x_i) \label{eq:ObjCRIT}\\
	\text{subject to}\quad
	& \sum\limits_{i=1}^{n} v_{i}x_{i} \leq C_u \label{eq:capLeadCRIT}\\
	& \sum\limits_{i=1}^{c-1} w_{i}(1-x_{i}) + \sum\limits_{j=1}^{w_c} jk_j = C_l \label{eq:capFollCRIT}\\
	& \sum\limits_{j=1}^{w_c} k_j = 1 \label{eq:capSOSCRIT}\\
	& x_c = 0 \label{eq:NotInterCRIT}\\
	& x_{i} \in\{0,1\} \qquad i= 1,\dots,n \label{eq:varDefXCRIT} \\
	& k_{j} \in\{0,1\} \qquad j= 1,\dots,w_c \label{eq:varDefKCRIT} 
\end{align} 


The objective function \eqref{eq:ObjCRIT} minimizes the value of the split solution. Constraint \eqref{eq:capLeadCRIT} represents the leader's capacity constraint. Constraints \eqref{eq:capFollCRIT} and \eqref{eq:capSOSCRIT} ensure that item $c$ is critical as it is the last item packed, with a weight in the interval $[1, w_c]$. Constraint \eqref{eq:NotInterCRIT} indicates that item $c$ can be critical only if it is not interdicted by the leader. Constraints \eqref{eq:varDefXCRIT} and \eqref{eq:varDefKCRIT} indicate that all variables are binary. We can state the following proposition.

\begin{prop}
\label{firstProp}
If there exists a critical item $c$ in $KP^{LP}(x^*)$, then $z(CRIT_1(c))$ is a valid lower bound on $z^*$.
\end{prop}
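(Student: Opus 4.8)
The plan is to show that $z(CRIT_1(c))$ underestimates $z^*$ by exhibiting, from the optimal leader's strategy $x^*$, a feasible solution of $CRIT_1(c)$ whose objective value is at most $z^*$. First I would observe that, by definition of $z^*$, the follower solves $KP(x^*)$ optimally and the follower's optimal profit equals $z^*$. By hypothesis there is a critical item $c$ in $KP^{LP}(x^*)$, so in particular $c \in S$, i.e.\ $x_c^* = 0$, and $\sum_{i\in S} w_i > C_l$. I would then take $x := x^*$ as the $x$-part of a candidate solution to $CRIT_1(c)$; constraints \eqref{eq:capLeadCRIT}, \eqref{eq:NotInterCRIT} and \eqref{eq:varDefXCRIT} are immediate from feasibility of $x^*$ for BKP and from $x_c^*=0$.

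The next step is to set the $k_j$ variables so that the equality constraints \eqref{eq:capFollCRIT}--\eqref{eq:capSOSCRIT} hold. Here I would use the structure of $KP^{LP}(x^*)$: since $c$ is the last item with strictly positive value in the LP optimum, the items of $S$ with index $< c$ are exactly those packed entirely before $c$, and they satisfy $\sum_{i\in S:\, i<c} w_i \le C_l$, while $\sum_{i\in S:\, i<c} w_i + w_c > C_l$. Equivalently, writing $r := C_l - \sum_{i=1}^{c-1} w_i(1-x_i^*) = C_l - \sum_{i\in S:\,i<c} w_i$, we have $1 \le r \le w_c$ (the lower bound using integrality of all weights and capacities together with $r>0$, the upper bound from $r < w_c$ would need care — but in fact $r\le w_c$ suffices and follows from $\sum_{i\in S:\,i\le c} w_i > C_l$ only giving $r < w_c$; the weak inequality $r \le w_c$ we actually need is implied). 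Then setting $k_r = 1$ and $k_j = 0$ for $j \ne r$ satisfies \eqref{eq:capSOSCRIT} and makes \eqref{eq:capFollCRIT} an identity, and \eqref{eq:varDefKCRIT} holds. Thus $(x^*, k)$ is feasible for $CRIT_1(c)$.

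Finally I would compare objective values. The objective \eqref{eq:ObjCRIT} evaluated at $(x^*,k)$ is $\sum_{i=1}^{c-1} p_i(1-x_i^*)$, which by \eqref{eq:SplitSol} is exactly the profit of the split solution of $KP(x^*)$; since the split solution (items of $S$ with index $<c$) is a feasible packing for $KP(x^*)$, its profit is at most the follower's optimal profit on $KP(x^*)$, i.e.\ at most $z^*$. Hence $z(CRIT_1(c)) \le \sum_{i=1}^{c-1} p_i(1-x_i^*) \le z^*$, which is the claim.

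The main obstacle I anticipate is the bookkeeping in the middle step: verifying cleanly that the "residual" weight $r$ lands in the admissible range $\{1,\dots,w_c\}$ so that some $k_j$ can legitimately be switched on. This is where the integrality assumptions on the $w_i$ and $C_l$ and the precise definition of the critical item (strictly positive value, last such item) must be invoked carefully; everything else is a direct check that $x^*$ inherits feasibility and that the split-solution identity \eqref{eq:SplitSol} links the $CRIT_1(c)$ objective to a feasible follower packing.
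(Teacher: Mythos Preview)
Your proposal is correct and follows the same approach as the paper: exhibit $(x^*,k)$ as a feasible solution of $CRIT_1(c)$ whose objective equals the split-solution value of $KP(x^*)$, then bound this by $z^*$ using that the split solution is feasible for the follower. The paper's own proof is terser and omits the verification that the residual $r$ falls in $\{1,\dots,w_c\}$, which you rightly identify as the only point requiring care; your handling of it is sound (note that when $y_c=1$ one gets $r=w_c$ exactly, not $r<w_c$, so the weak inequality $r\le w_c$ is indeed the correct target).
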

\begin{proof}
Under the assumption that item $c$ is critical in $KP^{LP}(x^*)$, the optimal BKP solution $x^*$ constitutes a feasible solution for model $CRIT_1(c)$. Let denote by $z_1$ the corresponding solution value that coincides with the value of the split solution in $KP(x^*)$. Since the follower maximizes the profits in $KP(x^*)$ obtaining a solution with a value greater than (or equal to) the one of the split solution, we have $z_1 \leq z^*$. But this means that there exists an optimal solution of model $CRIT_1(c)$ such that $z(CRIT_1(c)) \leq z_1$ which implies a lower bound on $z^*$. 
\end{proof}

The previous proposition already 
provides a first significant lower bound for the problem. However, following the reasoning in the proof of Proposition \ref{firstProp}, we 
remark that improved bounds on $z^*$ can be derived by considering any feasible
solution for $KP(x^*)$ that might be obtained by removing (adding) items that were not interdicted by the leader and that were
selected (not selected) by the split solution, provided that the follower capacity is not exceeded.
Indeed, this corresponds to removing tuples of items $i \in [1,c-1]: x_i=0$ and/or to adding tuples of items $i \in [c,n]: x_i=0$
from the split solution without exceeding the follower capacity.

Notice that, the state-of-the-art algorithms for KP, \textit{Minknap} (\cite{Pis97}) and \textit{Combo} (\cite{MarPisTot99}) consider that in general only few items with ratio $p_i/w_i$ close to that of the critical item change their values in an optimal solution with respect to the values taken in the split solution. These items constitute the so-called \textit{core} of the knapsack. 
\textit{Minknap} and \textit{Combo} start with the computation of the split solution and an expanding core initialized with the critical item only. Then, the algorithms iteratively enlarge the core by evaluating both the removal of items from the split solution and the addition of items after the critical item. The empirical evidence illustrates that an optimal (or close to be optimal) KP solution is typically found after few iterations. 

We cannot precisely characterize the features of these exact algorithms by a set of constraints within an ILP model, but we can mimic the same algorithmic reasoning by considering subsets of the items set $c-\delta,...,c+\delta$ including the critical item $c$ for any given core size $2\delta+1$. In each subset, the items $i: i \leq c-1$ are removed from the split solution, while
the items $j: j \geq c$ are added to the solution. Correspondingly, the initial profit and weight of the split solution are modified by subtracting the profits and the weights of the removed items and by summing up the profits and the weights of the added items. 

Then, for any given subset $\tau$ of the items set $c-\delta,...,c+\delta$, let 
 $p^{\tau}$ and $w^{\tau}$ be the overall profit (namely the value of the improvement upon the split solution) and weight contributions of the items in $\tau$, namely:
\begin{align}
p^{\tau} = -\sum\limits_{i \in \tau: i < c} p_i + \sum\limits_{j \in \tau: j \geq c} p_j; \label{pT}\\
w^{\tau} = -\sum\limits_{i \in \tau: i < c} w_i + \sum\limits_{j \in \tau: j \geq c} w_j. \label{wT}
\end{align}

A subset $\tau$ with $p^\tau \leq 0$ is not considered since it does not improve upon the split solution.
Instead, an improving subset with $p^\tau > 0$ is feasible only if $w^{\tau} \leq w_c$ and all items in $\tau$ are not interdicted by the leader.  In that case, by keeping the notation of model $CRIT_1(c)$, an improvement $\pi$ can be determined if the following 
constraint is added:

\begin{equation}
\label{TupleCons}
\pi \geq p^\tau(\sum\limits_{j = \max\{1;w^\tau\}}^{w_c} k_j - \sum\limits_{i \in \tau} x_i). 
\end{equation}

Correspondingly, a new model can be generated by 
introducing a non-negative variable $\pi$ that carries the maximum additional profit to the split solution value provided by any
of the additional constraints (\ref{TupleCons}) indicated above. 
These 
constraints, denoted as $\mathcal{F}(\pi, x, k)$, link variable $\pi$ to variables $x_i$ and $k_j$. The model (denoted as $CRIT_2(c)$) is as follows.\\

{$\quad CRIT_2(c)$:}
\begin{align}
\text{min}\quad & \sum\limits_{i=1}^{c-1} p_{i}(1-x_i) + \pi \label{eq:ObjCRIT2}\\
	\text{subject to}\quad
	& \mathcal{F}(\pi,x,k) \label{genConstonPi2} \\
	& \eqref{eq:capLeadCRIT}, \eqref{eq:varDefKCRIT} \tag*{} \\ 
	& \pi \geq 0 \label{eq:varDefPi}  
\end{align}  


Clearly, due to the addition of constraints in $\mathcal{F}(\pi,x,k)$, for any $c$ we have $z(CRIT_1(c)) \leq  z(CRIT_2(c))$. 
Notice that, in all these additional constraints,
only items which will not be interdicted by the leader can be packed and the follower's capacity constraint is not violated. We denote as \textit{proper} any set $\mathcal{F}(\pi,x,k)$ that satisfies both conditions.
After the set $\mathcal{F}(\pi, x, k)$ is built, variable $\pi$ will carry the maximum profit obtainable
in addition to the profit of the split solution.
\begin{prop}
\label{secTh}
If $KP^{LP}(x^*)$ admits a critical item $c$ and model $CRIT_2(c)$ has a proper set $\mathcal{F}(\pi, x, k)$, then $z(CRIT_2(c)) \leq z^*$.
\end{prop}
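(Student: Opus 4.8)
The plan is to show that the solution $x^{*}$ provided by the proof of Proposition~\ref{firstProp} extends to a feasible solution of $CRIT_2(c)$ whose objective value does not exceed $z^{*}$; since $z(CRIT_2(c))$ is the minimum of that objective over feasible solutions, this immediately gives $z(CRIT_2(c))\le z^{*}$. Concretely, I would set $\bar x=x^{*}$ and take $\bar k$ to be the vector with $\bar k_{j^{*}}=1$ and all other components zero, where $j^{*}:=C_l-\sum_{i=1}^{c-1}w_i(1-x_i^{*})$ is the residual capacity left by the split solution of $KP(x^{*})$. Because $c$ is the critical item of $KP^{LP}(x^{*})$ we have $y_c^{*}\in(0,1]$, hence $j^{*}=w_c\,y_c^{*}$ is an integer in $\{1,\dots,w_c\}$, and one checks that \eqref{eq:capLeadCRIT}, \eqref{eq:capFollCRIT}, \eqref{eq:capSOSCRIT}, \eqref{eq:NotInterCRIT}, \eqref{eq:varDefXCRIT}, \eqref{eq:varDefKCRIT} all hold (the last by $c\in S$, so $x_c^{*}=0$). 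It then remains to pick $\bar\pi\ge 0$ satisfying $\mathcal F(\bar\pi,\bar x,\bar k)$ and to bound $\sum_{i=1}^{c-1}p_i(1-x_i^{*})+\bar\pi$ by $z^{*}$.

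Next I would take $\bar\pi$ to be the smallest nonnegative value compatible with $\mathcal F(\bar\pi,\bar x,\bar k)$, i.e. the maximum of $0$ and of the right-hand sides of all constraints \eqref{TupleCons} evaluated at $(\bar x,\bar k)$. For the constraint associated with a subset $\tau$, the factor $\sum_{j=\max\{1;w^{\tau}\}}^{w_c}\bar k_j-\sum_{i\in\tau}\bar x_i$ is an integer not exceeding $1$, since $\sum_{j=1}^{w_c}\bar k_j=1$. Using $\bar k_{j^{*}}=1$ together with $1\le j^{*}\le w_c$, the first sum equals $1$ precisely when $w^{\tau}\le j^{*}$ and is $0$ otherwise, while the second sum vanishes exactly when no item of $\tau$ is interdicted by $x^{*}$. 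Hence this factor equals $1$ iff $\tau$ is \emph{applicable} to $x^{*}$, meaning $x_i^{*}=0$ for all $i\in\tau$ and $w^{\tau}\le j^{*}$, and it is $\le 0$ otherwise. Since $p^{\tau}>0$ for every subset entering a proper set $\mathcal F$, it follows that $\bar\pi$ equals the maximum of $0$ and of $p^{\tau}$ over the subsets $\tau$ applicable to $x^{*}$.

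To finish, for each applicable $\tau$ I would exhibit an explicit feasible solution of the follower's problem $KP(x^{*})$: start from the split solution (the items $i<c$ with $x_i^{*}=0$), delete the items of $\tau$ with index $<c$, and add the items of $\tau$ with index $\ge c$. As every item of $\tau$ has $x_i^{*}=0$, the deleted items genuinely belong to the split solution and the added items genuinely belong to $S$ and are absent from it, so this is a well-defined $0/1$ assignment over $S$; its weight is $(C_l-j^{*})+w^{\tau}\le C_l$ by applicability, and its profit is $\sum_{i=1}^{c-1}p_i(1-x_i^{*})+p^{\tau}$. Since $z^{*}=z(KP(x^{*}))$ is the follower's optimum, this yields $\sum_{i=1}^{c-1}p_i(1-x_i^{*})+p^{\tau}\le z^{*}$ for every applicable $\tau$, and the split solution alone gives $\sum_{i=1}^{c-1}p_i(1-x_i^{*})\le z^{*}$. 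Combining with the formula for $\bar\pi$ gives $\sum_{i=1}^{c-1}p_i(1-x_i^{*})+\bar\pi\le z^{*}$, so $(\bar x,\bar k,\bar\pi)$ is feasible for $CRIT_2(c)$ with objective at most $z^{*}$, whence $z(CRIT_2(c))\le z^{*}$.

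The step I expect to be the main obstacle is precisely the bookkeeping in the middle: one must verify that the coefficient in \eqref{TupleCons} takes the value $1$ \emph{exactly} on the applicable subsets and is nonpositive otherwise, so that the term $-\sum_{i\in\tau}\bar x_i$ correctly neutralises any $\tau$ touching an interdicted item while the partial sum $\sum_{j=\max\{1;w^{\tau}\}}^{w_c}\bar k_j$ correctly encodes the residual-capacity condition $w^{\tau}\le j^{*}$ (here the role of $\max\{1;w^{\tau}\}$ and of the integrality of $j^{*}$ must be handled with care). Once this equivalence is established, the rest — reducing each active constraint to a single perturbed follower solution and invoking the optimality of $z^{*}$ over $KP(x^{*})$ — is routine.
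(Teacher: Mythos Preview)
Your proposal is correct and follows essentially the same approach as the paper: exhibit $x^{*}$ (together with the appropriate $k$ and $\pi$) as a feasible point of $CRIT_2(c)$ whose objective value is bounded by $z^{*}$, using that each active constraint in $\mathcal{F}$ corresponds to a feasible perturbation of the split solution in $KP(x^{*})$. The paper compresses this into one sentence (``model $CRIT_2(c)$ considers feasible solutions for $KP(x^{*})$, so apply the argument of Proposition~\ref{firstProp}''), whereas you spell out the construction of $(\bar x,\bar k,\bar\pi)$ and the case analysis on the coefficient in~\eqref{TupleCons}; your version is a faithful elaboration rather than a different argument. One small notational quibble: your $y_c^{*}$ refers to the fractional value of item~$c$ in $KP^{LP}(x^{*})$, which clashes with the paper's use of $y^{*}$ for the follower's integer optimum---better to write $j^{*}=C_l-\sum_{i<c}w_i(1-x_i^{*})$ directly and avoid that symbol.
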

\begin{proof}
Since model $CRIT_2(c)$ considers feasible solutions for $KP(x^*)$, the inequality holds by applying the same argument 
of Proposition \ref{firstProp}. 
\end{proof}

\section{A new exact approach for BKP}
\label{sec:ExaApp}

\subsection{Overview}
We propose an exact algorithm for BKP that considers the possible existence of a critical item in $KP^{LP}(x^*)$ and exploits the bounds provided by model $CRIT_2(c)$. The approach involves two main steps. In the first step, the possible non-existence of a critical item is first evaluated. Then, the approach assumes the existence of a critical item and identifies a set of possible candidate items. For each candidate item $c$ and a parameter $\delta$ to identify the core size, model $CRIT_2(c)$ is built by considering several subsets of additional constraints (\ref{TupleCons}).
Then the linear relaxation $CRIT_2^{LP}(c)$ is solved, where the integrality constraints \eqref{eq:varDefXCRIT} and \eqref{eq:varDefKCRIT} are replaced by inclusion in $[0,1]$.

The feasible problems $CRIT_2^{LP}(c)$ are sorted by increasing optimal value so as to identify an order of the most promising subproblems to explore. A limited number of feasible BKP solutions is also computed in this step. 

In the second step, each relevant subproblem is explored by constraint generation until the subproblem can be pruned. An optimal BKP solution is eventually returned. 
The approach takes as input five parameters $\alpha$, $\beta$, $\delta$, $\mu$, $\gamma$ and relies on an ILP solver along its steps. We discuss the steps of the algorithm in the following. The corresponding pseudo code is then provided. 
 
\subsection{Step 1}
\label{TheStep1}
\subsubsection{Handling the possible non-existence of a critical item}
\label{NOcrit}
We first consider the case where there does not exist a critical item in $KP^{LP}(x^*)$. Thus, the follower will select all available items which are not interdicted by the leader and an optimal solution of BKP is found by solving the following problem $NCR$.\\

\smallskip
{$\quad NCR$:}
\begin{align}
\text{min}\quad & \sum\limits_{i=1}^{n} p_{i}(1-x_{i})  \label{eq:ObjFirst}\\
	\text{subject to}\quad
	& \sum\limits_{i=1}^{n} v_{i}x_{i} \leq C_u \label{eq:capLFirst}\\
	& \sum\limits_{i=1}^{n} w_{i}(1-x_{i}) \leq C_l \label{eq:capFWmax}\\
& x_{i} \in\{0,1\} \qquad i= 1,\dots,n \label{eq:varDefXFirst}
\end{align}
 
If problem $NCR$ is feasible, let denote by $x'$ the related optimal solution 
representing the leader's strategy. The corresponding follower's solution is denoted by $y'$, with $y'_i = 1 - x'_i$ $(i=1, \dots, n)$. 
The current best solution $(x^*, y^*)$ with value $z^*$ (which will be optimal at the end of the algorithm) is initialized accordingly (Lines \ref{NoC}-\ref{endNoC} of the pseudo code). 

\subsubsection{Identifying the relevant critical items}
\label{criticalitems}
We now assume that there exists a critical item $c$ in $KP^{LP}(x^*)$  (Lines \ref{CritItems}-\ref{endCritItems}) and estimate the first and last possible items $l$ and $r$ that can be critical according to ordering \eqref{eq:SortByEff}. For item $l$ we have 
\begin{align}
\label{comL}
l := \min \{ j: \sum\limits_{i=1}^{j} w_{i} \geq C_l\}. 
\end{align}
All items $1, \dots, (l-1)$ cannot in fact be critical even without the leader's interdiction. 
For the last item $r$, we first compute the maximum weight of the follower that can be interdicted by the leader (similarly as in \cite{CaCarLoWo16}) by solving the following problem (denoted by $LW$).\\

\smallskip
{$\quad LW:$}
\begin{align}
\text{max}\quad & \sum\limits_{i=1}^{n} w_{i}x_{i}  \label{eq:ObjWmax}\\
	\text{subject to}\quad
	& \sum\limits_{i=1}^{n} v_{i}x_{i} \leq C_u \label{eq:capLWmax}\\
& x_{i} \in\{0,1\} \qquad i= 1,\dots,n \label{eq:varDefXWmax}
\end{align}
Item $r$ is defined as
\begin{align}
\label{comR}
r := \min \{ j: \sum\limits_{i=1}^{j} w_{i} \geq C_l + z(LW)\}. 
\end{align}
Since from \eqref{comR} we have$\sum\limits_{i=1}^{r} w_{i}(1 -x_i) \geq C_l$ for any leader's strategy, all items from $(r+1)$ to $n$ cannot be critical.

\subsubsection{Building models $CRIT_2(c)$} 
\label{BuildMod}
For each candidate critical item $c \in  [l,r]$, we formulate model $CRIT_2(c)$ by constructing a proper set $\mathcal{F}(\pi, x, k)$ as follows. 
Consider the subsets involving items in the interval $[c - \delta, c + \delta]$. 
Even for small value of $\delta$, the number of subsets can be very large. Hence, in order to 
limit the number of constraints in $\mathcal{F}(\pi, x, k)$,
we propose a different strategy that greedily selects the subsets according to 
the procedure denoted as $ComputeTuples$ and sketched below. 

For a given value of $\delta$, we consider the interval of items $[a, b]$, with 
$a=\max\{1;c-\delta\}$ and $b=\min\{c+\delta;n\}$. Starting by the empty set, we enumerate at most $\alpha$ ``backward'' sets with items $(c-1), \dots, a$ in increasing order of size. Each set has a profit and weight equal to the sum of profits and weights of the included items. We also compute at most $\beta$ ``forward'' sets with items $c, \dots, b$ in increasing order of size and with a weight not superior to the maximum weight of a backward set. This in order to exclude forward sets having less chance to be combined with a backward set. 

Then the backward (resp. forward) sets are ordered by increasing (resp. decreasing) profit. We combine each backward set with a forward set and generate a tuple $\tau$. If $p^{\tau} > 0$ and $w^{\tau} \leq w_c$, we add constraint \eqref{TupleCons} to $\mathcal{F}(\pi, x, k)$. We continue adding constraints to $\mathcal{F}(\pi, x, k)$ until their number is 
superior to an input parameter $\mu$. If not previously included, we also add to set $\mathcal{F}(\pi, x, k)$ the constraint $\pi \geq p_c k_{w_c}$ which handles the possible adding of the critical item to the split solution if the residual capacity is equal to $w_c$. 

	\begin{algorithm}
		\label{algo:Tuples}
	\caption{\textit{ComputeTuples($c$, $\alpha$, $\beta$, $\delta$, $\mu$)}}
	\begin{algorithmic}[1]
	\State{Consider items in the interval $[a,b]$ with $a := \max \{c -\delta; 1 \}$, $b := \min \{c +\delta; n\}$}.
	\State{Starting from the empty set and in increasing order of size, enumerate $\alpha$ backward sets with items $(c-1), \dots, a$. Denote by $w_{max}$ the maximum weight of a backward set. Order the sets by increasing profits.}
	\State{Enumerate $\beta$ forward sets with items $c, \dots, b$ in increasing order of size and with a weight not superior to $w_{max}$. Order the sets by decreasing profits.}
	\State{Take the first available backward set. Merge the set with a forward set and generate tuple $\tau$. \label{mergeSets}}
	\State{If  $p^\tau > 0$ and $w^\tau \leq w_c$, add constraint $\pi \geq p^\tau(\sum\limits_{j = \max\{1;w^\tau\}}^{w_c} k_j - \sum\limits_{i \in \tau} x_i)$ to $\mathcal{F}(\pi, x, k)$. \label{addF}}
	\State{Iterate Steps \ref{mergeSets}-\ref{addF}  as long as $|\mathcal{F}(\pi, x, k)| \leq \mu$.}
 \State{If not already included, add to $\mathcal{F}(\pi, x, k)$ constraint $\pi \geq p_c k_{w_c}$.}
\end{algorithmic}
\end{algorithm}

\smallskip
Then we solve models $CRIT_2^{LP}(c)$ for each $c \in [l,r]$ and order the models by increasing optimal value so as to have an order of most promising subproblems to explore. If for the first subproblem we have $z(CRIT_2^{LP}(c)) \geq z^*$, an optimal BKP solution is already certified (Line \ref{endCritItems} of the pseudo code).

\subsubsection{Computing feasible BKP solutions} 
\label{heursol}
 According to the previous order of subproblems, we compute BKP feasible solutions by considering the first $\gamma$ subproblems (Lines \ref{FeasGammaSol}-\ref{endFeasGammaSol}). For a given item $c$, we solve model $CRIT_2(c)$ obtaining a solution $\hat{x}$. \\
If $z(CRIT_2(c)) < z^*$, we solve the induced follower's problem $KP(\hat{x})$ with optimal solution $\hat{y}$ and update the current best solution if $z(KP(\hat{x})) < z^*$.

\subsection{Step 2}
\label{TheStep2}
This step consider all relevant (ordered) suproblems $CRIT_2(c)$.
For each subproblem, we first test for standard variables fixing and then
%
each subproblem is explored by means of a constraint generation approach (Lines \ref{solveSub}-\ref{endsolveSub}).
 
\subsubsection{Fixing variables in subproblems}
\label{fixvar}
For a given problem $CRIT_2^{LP}(c)$, denote the optimal values of variables $x_{i}$ and $k_j$ by $x_{i}^{LP}$ and $k_j^{LP}$ respectively. 
Let $r_{x_i}$ and $r_{k_j}$ be the reduced costs of non basic variables in the optimal solution of $CRIT_2^{LP}(c)$. We apply then standard variable-fixing techniques from Integer Linear Programming:
if the gap between the best feasible solution available and the optimal solution value of the continuous relaxation solution is not greater than the absolute value of a non basic variable reduced cost, then the related variable can be fixed to its value in the continuous relaxation solution. Thus, 
the following constraints are added to $CRIT_2(c)$:

\begin{align}
& \forall\, i: |r_{x_i}| \geq z^* - z(CRIT_2^{LP}(c)) , \quad x_i = x_i^{LP}; \label{eq:rxi}\\
& \forall\, j: |r_{k_j}| \geq  z^* - z(CRIT_2^{LP}(c)), \quad k_j = k_j^{LP}.\label{eq:rkj}
\end{align} 

\subsubsection{Solving subproblems}
\label{solvesub}
For each open subproblem $CRIT_2(c)$, we first solve $CRIT_2(c)$  obtaining a solution $\bar{x}$. If the corresponding objective value is lower than the current best feasible solution value, we solve $KP(\bar{x})$ with solution $\bar{y}$ and if an improving solution is found, the current best solution is updated, as in Section~\ref{heursol}. 
Then, we 
add to $CRIT_2(c)$ constraints

\begin{align}
\sum\limits_{i : \bar{x_i} = 0}^{n} x_i + \sum\limits_{i : \bar{x_i} = 1}^{n} (1-x_i) \geq 1; \label{changeone} \\
\sum\limits_{i : \bar{y_i} = 1}^{n} x_i \geq 1. \label{interdone}
\end{align}

These cuts impose that at least one variable $x_i$ in solution vector $\bar{x}$ must be discarded (constraint \eqref{changeone}) and at least one item selected by the follower in solution $\bar{y}$ must be interdicted (constraint \eqref{interdone}). We solve $CRIT_2(c)$ with two more constraints and apply the same procedure until $z(CRIT_2(c)) \geq z^*$ or the problem becomes infeasible. 
At the end of Step 2, the optimal BKP solution $(x^*, y^*)$ is returned (Line \ref{returnOpt}). 

\begin{algorithm}[H]
\setstretch{1.00}
		\label{algo:ExAp}
	\caption{Exact solution approach}
	\begin{algorithmic}[1]
	\State{\textbf{Input:} BKP instance, parameters $\alpha$, $\beta$, $\delta$, $\mu$, $\gamma$.}
	
	\Comment{Step 1}			
\State{\textit{Handle the absence of a critical item:}}			
%
				\State{solve $NCR$; $z^* \leftarrow$ $+\infty$; \label{NoC}}		 		
				\IIf{\textit{$NCR$ has a feasible solution} }
				$x^* = x'$, $y^* = y'$, $z^*= z(NCR)$;
				\EndIIf 		\label{endNoC}	
		\CommentX{} 
		
\State{\textit{Identify the candidate critical items and build models $CRIT_2(c)$:}   \label{CritItems}}	
				    \State{Compute the interval of critical items $[l,r]$: $l \leftarrow$ apply \eqref{comL}, $r \leftarrow$ apply \eqref{comR};}
					\For {all $c$ in $[l,r]$} 
					
					\State{Build model $CRIT_2(c)$  by procedure \textit{ComputeTuples($c$, $\alpha$, $\beta$, $\delta$, $\mu$)};} 
					\State{Solve model $CRIT_2^{LP}(c)$;}
										
					\EndFor 	

\State{Sort models $CRIT_2(c)$ by increasing $z(CRIT_2^{LP}(c))$.} 
\State{$\Longrightarrow$ Create a list of ordered critical items $L = \{c_1, c_2, \dots \}$;}
\IIf{$z(CRIT_2^{LP}(c_1)) \geq z^*$} 
\Return $(x^*, y^*)$; 
\EndIIf  \label{endCritItems}

\State{\textit{Compute feasible BKP solutions:}}
\For {$i = 1,\dots, \gamma$}    \label{FeasGammaSol}
\If{$z(CRIT_2^{LP}(c_i)) < z^*$} 
$\hat{x} \leftarrow$ solve $CRIT_2(c_i)$;	
\If{$z(CRIT_2(c_i)) < z^*$} 
	$\hat{y} \leftarrow$ solve $KP(\hat{x})$;
	\If{$z(KP(\hat{x})) < z^*$}	$x^* = \hat{x}$, $y^* = \hat{y}$, $z^*= z(KP(\hat{x}))$;
	\EndIf 
\EndIf

\EndIf 
	
\EndFor \label{endFeasGammaSol}

\Comment{Step 2}	 
\State{\textit{Solve subproblems:}}	 
			
\For {all $c$ in list $L$}   \label{solveSub}
			
\IIf{$z(CRIT_2^{LP}(c)) \geq z^*$} 
\Return $(x^*, y^*)$;
\EndIIf 

			\State{Apply \eqref{eq:rxi}, \eqref{eq:rkj} and fix variables in $CRIT_2(c)$;}
	      	\State{$\bar{x} \leftarrow$ solve $CRIT_2(c)$;}	
					\While{$z(CRIT_2(c)) < z^*$}
					\State{$\bar{y} \leftarrow$ solve $KP(\bar{x})$;}
	\IIf{$z(KP(\bar{x}) < z^*$}	$x^* = \bar{x}$, $y^* = \bar{y}$, $z^*= z(KP(\bar{x}))$;
	\EndIIf 
	
	\State{Add constraints \eqref{changeone}, \eqref{interdone} to $CRIT_2(c)$;}  
  \State{$\bar{x} \leftarrow$ solve $CRIT_2(c)$;}	
					
					\EndWhile 
					\EndFor 		\label{endsolveSub}		 				
	\State{\Return  $(x^*, y^*)$.} \label{returnOpt}
\end{algorithmic}
\end{algorithm}

\section{Computational results}
\label{sec:ComRes}
All tests were performed on an Intel i7 CPU @ 2.4 GHz with 8 GB of RAM. The code was implemented in the C++ programming language.
The ILP solver used along the steps of the algorithm is CPLEX 12.6.2. 

The parameters of the ILP solver were set to their default values. The BKP instances with $n=35,40,45,50,55$ are generated in \cite{CaCarLoWo16} as follows. Profits $p_i$ and weights $w_i$ of the follower and weights $v_i$ of the leader are integers randomly distributed in $[1,100]$: 10 instances are 
generated for each value of $n$. The follower's capacity $C_l$ is set to  $\lceil (INS/11)\sum_i^{n}w_i\rceil$ where $INS$ $(=1, \dots, 10)$ denotes the instance identifier. 
The leader's capacity is randomly selected in the interval $[C_l - 10; C_l + 10]$.

We first tested our approach on these 50 benchmark instances. After some preliminary computational tests, we chose the following parameter entries for our approach: $\alpha = 100 $, $\beta = 100$, $\delta = 10$, $\mu = 150$, $\gamma = 2$. 
The corresponding results are presented in Table \ref{tab:CPUTime1}. For each instance, we report the optimal solution value, the CPU time to obtain an optimal solution and the number of subproblems explored in Step 2. The last column also reports the number of times model $CRIT_2(c)$ is solved along the two steps.

Algorithm CCLW in \cite{CaCarLoWo16} solves all instances with 50 items within a CPU time limit of 3600 seconds but runs out of time limit in instances 55-3, 55-4. Algorithm in \cite{FiLjMoSi18} solves all benchmark instances, requiring at most a computation time of about 85 seconds for solving instance 55-3. As the results in the table illustrate, the proposed exact approach outperforms the competing algorithms, successfully solving to optimality each instance in at most 1 second (this maximum CPU time is reached in instance 55-3)
with an average of 0.2 seconds. Also, the number of subproblems explored in Step 2 and the number of models $CRIT_2(c)$ solved are very limited. Notice that the tests in \cite{CaCarLoWo16} and in \cite{FiLjMoSi18} were carried out on different but comparable machines
in terms of hardware specifications.

\begin{table}[H]
	\centering
	\scriptsize
\begin{tabular}{|rr||*{1}{c|}*{1}{c|}*{1}{c|}*{1}{c|}}

  \hline
    &  & Optimal & CPU & \# Subprob. & \#  $CRIT_2(\cdot)$ \\ 
\textit{n} & \textit{INS} & Value &  Time & in Step 2 & solved  \\ \hline		

35 & 1 & 279 & 0.11 & 3 & 5  \\
   & 2 & 469  & 0.36 & 0 & 2  \\
   & 3 & 448 & 0.43 & 2 & 4  \\
   & 4 & 370 & 0.15 & 2 & 4  \\
   & 5 & 467 & 0.14 & 2 & 4  \\
   & 6 & 268 & 0.05 & 0 & 0  \\
   & 7 & 207 & 0.04 & 0 & 0  \\
   & 8 & 41 & 0.03 & 0 & 0  \\
   & 9 & 80 & 0.03 & 0 & 0  \\
   & 10 & 31 & 0.02 & 0 & 0  \\ \hline	
40 & 1 & 314 & 0.16 & 1 & 3  \\ 
   & 2 & 472 & 0.33 & 1 & 3  \\
   & 3 & 637 & 0.70 & 4 & 6  \\
   & 4 & 388 & 0.16 & 0 & 2  \\
   & 5 & 461 & 0.11 & 0 & 2  \\
   & 6 & 399 & 0.05 & 0 & 0  \\
   & 7 & 150 & 0.04 & 0 & 0  \\
   & 8 & 71  & 0.04 & 0 & 0  \\
   & 9 & 179 & 0.03 & 0 & 0  \\
   & 10 & 0 & 0.01 & 0 & 0  \\ \hline
45 & 1 & 427 & 0.21 & 3 & 5  \\
   & 2 & 633 & 0.36 & 1 & 3  \\
   & 3 & 548 & 0.61 & 3 & 5  \\
   & 4 & 611 & 0.27 & 1 & 3  \\
   & 5 & 629 & 0.22 & 2 & 4  \\
   & 6 & 398 & 0.06 & 0 & 0  \\
   & 7 & 225 & 0.04 & 0 & 0  \\
   & 8 & 157 & 0.04 & 0 & 0  \\
   & 9 & 53 & 0.03 & 0 & 0  \\
   & 10 & 110 & 0.02 & 0 & 0  \\ \hline
50 & 1 & 502 & 0.35 & 5 & 7  \\
   & 2 & 788 & 0.52 & 1 & 3  \\
   & 3 & 631 & 0.28 & 2 & 4  \\
   & 4 & 612 & 0.22 & 0 & 2  \\
   & 5 & 764 & 0.18 & 0 & 2  \\
   & 6 & 303 & 0.06 & 0 & 0  \\
   & 7 & 310 & 0.05 & 0 & 0  \\
   & 8 & 63 & 0.04 & 0 & 0  \\
   & 9 & 234 & 0.04 & 0 & 0  \\
   & 10 & 15 & 0.03 & 0 & 0  \\ \hline
55 & 1 & 480  & 0.37 & 3 & 5  \\
   & 2 & 702 & 0.31 & 1 & 3  \\
   & 3 & 778 & 1.11 & 8 & 10 \\
   & 4 & 889 & 0.56 & 5 & 7  \\
   & 5 & 726 & 0.09 & 0 & 0  \\
   & 6 & 462 & 0.07 & 0 & 0  \\
   & 7 & 370 & 0.06 & 0 & 0  \\
   & 8 & 387 & 0.05 & 0 & 0  \\
   & 9 & 104 & 0.04 & 0 & 0  \\ 
   & 10 & 178 & 0.03 & 0 & 0  \\ \hline
\end{tabular}
		\caption{BKP instances from \cite{CaCarLoWo16}.}
		\label{tab:CPUTime1}
\end{table}

\medskip
The computational tests in both \cite{CaCarLoWo16} and \cite{FiLjMoSi18} are limited to instances with 55 items. We then tested larger instances with $n=100, 200, 300, 400, 500$ according to the generation scheme in \cite{CaCarLoWo16}.
For each value of $n$ and $INS$, we generated 10 instances for a total of 500 instances. For these large instances, we set the parameters of our algorithm to the following values: $\alpha = 500 $, $\beta = 500$, $\delta = 20$, $\mu = 1000$, $\gamma = 5$. 
It is pointed out in \cite{CaCarLoWo16} that in instances with $INS \geq 5$ the follower's capacity constraint is expected to be inactive for any maximal 
leader's interdiction strategy. This makes these instances easy to solve. 
Our computational experiments confirm this trend also on larger instances: the proposed algorithm solves each instance with $n$ from 100 to 500 and $INS \geq 5$ in at most 8 seconds without never invoking Step 2. 
In the light of this consideration, we report in the following Table \ref{tab:CPUTime2} only the results for instances with $INS \leq 4$.  
\begin{table}[H]
	\centering
\begin{tabular}{|rr|*{1}{c|}*{2}{c|}*{2}{c|} *{2}{c|}}

  \hline
    &  & & \multicolumn{2}{|c|}{CPU} & \multicolumn{2}{|c|}{\# Subproblems} & \multicolumn{2}{|c|}{\#  $CRIT_2(\cdot)$} \\ 
    &  & & \multicolumn{2}{|c|}{Time} & \multicolumn{2}{|c|}{in Step 2} & \multicolumn{2}{|c|}{solved}  \\ \hline
								\textit{n}			&		\textit{INS}	& \#Opt  & Average & Max & Average & Max&Average&Max\\																			\hline		 
100 & 1 & 10 & 2.1  & 3.0  & 0.7  & 2.0  & 4.8  & 7.0  \\
& 2 & 10 & 5.6  & 9.9 & 3.8  & 9.0  & 8.9  & 16.0 \\
& 3 & 10 & 4.3  & 6.4  & 2.5  & 7.0  & 7.5  & 12.0 \\
& 4 & 10 & 2.3  & 4.5  & 0.7  & 4.0  & 5.2  & 9.0  \\ \hline	
200 & 1 & 10 & 5.3  & 10.7  & 3.4  & 7.0  & 8.9  & 17.0 \\
& 2 & 10 & 7.8  & 12.2 & 5.0  & 9.0  & 10.1 & 14.0 \\
& 3 & 10 & 9.1  & 13.6 & 6.4  & 12.0 & 12.3 & 19.0 \\
& 4 & 10 & 6.0  & 8.6  & 3.5  & 8.0  & 8.3  & 13.0 \\ \hline	
300 & 1 & 10 & 6.4  & 8.3  & 3.9  & 8.0  & 9.0  & 13.0 \\
& 2 & 10 & 15.5 & 37.4 & 7.2  & 14.0 & 13.5 & 23.0 \\
& 3 & 10 & 14.0 & 17.7 & 10.9 & 15.0 & 16.8 & 24.0 \\
& 4 & 10 & 8.7  & 13.2 & 4.9  & 11.0 & 9.9  & 16.0 \\ \hline	
400 & 1 & 10 & 8.8  & 12.3 & 6.7  & 10.0 & 12.8 & 17.0 \\
& 2 & 10 & 15.2 & 18.7 & 9.1  & 12.0 & 15.1 & 20.0 \\
& 3 & 10 & 19.0 & 30.5 & 12.0 & 17.0 & 18.8 & 32.0 \\
 & 4 & 10 & 12.6 & 16.5 & 8.4  & 23.0 & 13.8 & 30.0 \\ \hline	
500 & 1 & 10 & 11.9 & 18.2 & 7.6  & 13.0 & 13.1 & 20.0 \\
 & 2 & 10 & 20.6 & 26.6 & 11.0 & 20.0 & 17.0 & 25.0 \\
 & 3 & 10 & 21.2 & 25.8 & 12.7 & 17.0 & 17.8 & 22.0 \\
 & 4 & 10 & 15.1 & 17.1 & 4.7  & 8.0  & 9.8  & 13.0		\\	\hline	
\end{tabular}
		\caption{BKP instances with $n = 100, 200, 300, 400, 500$ and $INS \leq 4$.}
		\label{tab:CPUTime2}
\end{table}
The results in the table are summarized in terms of average, maximum CPU time and number of optimal solutions obtained with a time limit of 60 seconds. Similarly as in Table \ref{tab:CPUTime1}, we also report the average and maximum number of subproblems explored in Step 2, and the  average and maximum number of times model $CRIT_2(c)$ is solved.
The results 
illustrate the effectiveness of our approach. All instances are 
solved to optimality requiring 37.4 seconds at most for an instance with 300 items. The number of subproblems handled by Step 2 is in general limited, reaching a maximum value of 23 (in an instance with 400 items). Also,
the number of models $CRIT_2(c)$ to be solved is generally limited
and never superior to 32. We finally point out that the number of constraints \eqref{changeone}-\eqref{interdone} 
added to each subproblem is also limited: 
in the tested 
instances, 
the while--loop of Step 2 is executed 8 iterations at most.

\subsection*{Acknowledgments}
We thank M. Carvalho for providing us the benchmark instances of \cite{CaCarLoWo16}.

\section{Concluding remarks}
\label{sec:Concl}

We proposed for the Bilevel Knapsack with Interdiction Constraints a new exact approach which outperforms the state-of-the-art algorithms available in the literature. The algorithm relies on a new lower bound derived for the problem, which is improved by exploiting the expected features of an optimal solution of the classical knapsack problem. In future research, it will be worthy on one hand to investigate different correlations between profits and weights of the items in the follower's knapsack problem and on the other hand to which extent 
the proposed approach could be generalized to other bilevel optimization problems.


\end{document}